\DeclareMathAlphabet\mathbfcal{OMS}{cmsy}{b}{n}
\DeclareMathOperator*{\st}{subject~to}
\DeclareMathOperator*{\argmin}{arg\,min}
\definecolor{darkgreen}{rgb}{0.0, 0.4, 0.0}
\definecolor{lightgray}{gray}{0.9}
\newcommand{\norm}[1]{\left\lVert#1\right\rVert}
\title[Safe Control with Minimal Regret]{Safe Control with Minimal Regret}
\author{
 \Name{Andrea Martin} \Email{andrea.martin@epfl.ch}\\
 \addr Institute of Mechanical Engineering, \'Ecole Polytechnique F\'ed\'erale de Lausanne, Switzerland
 \AND
 \Name{Luca Furieri} \Email{luca.furieri@epfl.ch}\\
 \addr Institute of Mechanical Engineering, \'Ecole Polytechnique F\'ed\'erale de Lausanne, Switzerland
 \AND
 \Name{Florian D\"orfler} \Email{dorfler@ethz.ch}\\
 \addr Department of Information Technology and Electrical Engineering, ETH Z\"urich, Switzerland
 \AND
 \Name{John Lygeros} \Email{jlygeros@ethz.ch}\\
 \addr Department of Information Technology and Electrical Engineering, ETH Z\"urich, Switzerland
 \AND
 \Name{Giancarlo Ferrari-Trecate} \Email{giancarlo.ferraritrecate@epfl.ch}\\
 \addr Institute of Mechanical Engineering, \'Ecole Polytechnique F\'ed\'erale de Lausanne, Switzerland
}
\begin{document}

\maketitle

\begin{abstract}
    As we move towards safety-critical cyber-physical systems that operate in non-stationary and uncertain environments, it becomes crucial to close the gap between classical optimal control algorithms and adaptive learning-based methods. In this paper, we present an efficient optimization-based approach for computing a finite-horizon robustly safe control policy that minimizes dynamic regret, in the sense of the loss relative to the optimal sequence of control actions selected in hindsight by a clairvoyant controller. By leveraging the system level synthesis framework (SLS), our method extends recent results on regret minimization for the linear quadratic regulator to optimal control subject to hard safety constraints, and allows competing against a safety-aware clairvoyant policy with minor modifications. Numerical experiments confirm superior performance with respect to finite-horizon constrained $\mathcal{H}_2$ and $\mathcal{H}_\infty$ control laws when the disturbance realizations poorly fit classical assumptions.
\end{abstract}

\begin{keywords}
safe adaptive control, dynamic regret, system level synthesis
\end{keywords}

\section{Introduction}
\label{sec:introduction}
Classical $\mathcal{H}_2$ and $\mathcal{H}_\infty$ control theories have studied how to optimally regulate the behavior of a linear dynamical system driven by a stochastic or worst-case disturbance process, respectively. In the $\mathcal{H}_2$ setting, control actions are chosen so to minimize the expected control cost incurred by the learner. Conversely, in the $\mathcal{H}_\infty$ setting, the agent minimizes the worst-case control cost across all disturbance realizations \citep{zhou1998essentials}. Both paradigms commit to a control strategy that is tailored to the presumed nature of the disturbance process, and that is blindly pursued regardless of the observed perturbation sequence. Hence, these control policies may suffer high cumulative costs if the disturbance realizations do not match the class of perturbations hypothesized a priori.

Classical literature on online learning has studied sequential decision-making algorithms that learn from experience, as repeated interactions with a \textit{memoryless} environment reveal more aspects of the problem at hand. In this setting, agents make no assumptions about the benign or adversarial nature of the environment, and they dynamically adjust their strategy based on information deduced from previous rounds to minimize regret \citep{shalev2011online, hazan2016introduction}. Informally, one can think of regret as measuring the loss suffered by a learner relative to the optimal policy in hindsight. 

When the environment evolves dynamically, online learning comes with significant new challenges, since the losses agents observe depend not only on their current action, but also on their past decisions. Recent years have witnessed an increasing interest in applying modern statistical and algorithmic techniques to classical control problems. Initiated by \cite{abbasi2011regret}, several works have approached the problem of adaptively controlling a linear dynamical system perturbed by a stochastic \citep{dean2018regret, cohen2019learning, lale2020logarithmic} or adversarial \citep{agarwal2019online, foster2020logarithmic, hazan2020nonstochastic, simchowitz2020improper} disturbance process from the perspective of \textit{policy regret} minimization. In this framework, control algorithms are designed to compete with the best \textit{static} policy selected in hindsight from a parametric class, and performance guarantees are expressed in terms of sublinear regret bounds against this idealized benchmark. Intuitively, attaining sublinear regret is desirable, as it implies that the average difference between the cost suffered by the learning algorithm and that of the best fixed a posteriori strategy converges to zero over time.

When the online learner interacts with a dynamic environment, policy regret minimization algorithms may yield loose performance certificates \citep{hazan2009efficient}. Indeed, while these methods effectively approach the best fixed strategy in the chosen comparator class, this static benchmark could also incur a high control cost. For instance, no single state-feedback controller can perform well in a scenario where disturbances alternate between being drawn according to a well-behaved stochastic process and being chosen adversarially \citep{goel2020regret}. 

This consideration motivates the design of algorithms that compete against the \textit{clairvoyant} control law that selects the globally optimal dynamic sequence of control actions in hindsight \citep{goel2020power}. This concept of \textit{dynamic regret}, which was first introduced in \cite{zinkevich2003online}, has recently been studied in the full-information setting in \cite{goel2020regret, sabag2021regret}, and in the measurement-feedback setting in \cite{goel2021regret}. Assuming knowledge of the underlying dynamics and control costs, these works have not only provided tight dynamic regret bounds in terms of the energy of the disturbance sequence,\footnote{Dynamic regret bounds are often expressed in terms of some ``regularity'' of the perturbation sequence since dynamic regret scales linearly with time in the worst-case \citep{jadbabaie2015online, goel2020power, zhao2020dynamic}.} but have also computed a control law that exactly minimizes the worst-case loss relative to the clairvoyant optimal policy; specifically, regret-optimal controllers have been explicitly derived, both in state-space and input-output form, via elegant reductions to classical $\mathcal{H}_\infty$ synthesis and Nehari noncausal approximation problems \citep{nehari1957bounded}.

Dynamic regret minimization algorithms, which combine design criteria from online learning theory with continuous action spaces typical of control, have further been shown to nicely interpolate between the performance of classical $\mathcal{H}_2$ and $\mathcal{H}_\infty$ controllers across both stochastic and adversarial environments \citep{goel2020regret, sabag2021regret}. However, due to the lack of provable robustness guarantees, these control policies do not lend themselves to real-time applications that impose hard safety constraints on the physical variables of the system. Despite being critical to reliably deploy learning-based methods in modern engineering systems, the problem of designing safety-aware algorithms with non-asymptotic performance guarantees has been only recently approached in \cite{nonhoff2021online}, which assumes that the underlying linear dynamics are not affected by disturbances, and in \cite{li2020online, li2021safe}, albeit from the standpoint of policy regret minimization. To the best of our knowledge, the literature offers no characterization of a control policy that simultaneously attains minimum dynamic regret and guarantees compliance with hard safety constraints in face of the uncertain disturbance realizations.

In this paper, we present an efficient optimization-based approach for computing a finite-horizon safe regret-optimal control policy that exactly minimizes the loss relative to the clairvoyant optimal controller, while satisfying safety constraints defined over physical variables of the system. To do so, we leverage the recent system level parametrization (SLP) of linear dynamic controllers \citep{wang2019system} as well as classical duality results from robust optimization. Compared to previous works \citep{goel2020regret, sabag2021regret}, we directly treat the system \textit{closed-loop responses} as design variables, and we reduce the safe controller synthesis task to solving a semidefinite program (SDP). The proposed method circumvents the need to repeatedly appeal to the whitening property of Kalman filters for calculating explicit causal matrix factorizations \citep{kailath2000linear}, and it allows a one-shot computation of a safe regret-optimal controller, without requiring to iteratively solve suboptimal instances of the problem. As such, we believe the flexibility provided by this optimization perspective can pave the way towards analyzing more complex control problems through the lens of regret minimization. Our approach also permits to naturally capture the time-varying nature of the system dynamics, the control costs, and the safety constraints. We present numerical experiments to support using dynamic regret as control design criterion.

\section{Problem Statement and Preliminaries}
\label{sec:problem_statement_and_preliminaries}
We consider known discrete-time linear time-varying dynamical systems with state-space equations
\begin{equation}
    \label{eq:state_dynamics}
    x_{t+1} = A_t x_t + B_t u_t + w_t\,,
\end{equation}
where $u_t \in \mathbb{R}^m$, $x_t \in \mathbb{R}^n$ and $w_t \in \mathbb{R}^n$ are the control input, the system state, and an exogenous disturbance, respectively. We do not make any assumptions about the statistical distribution of the disturbance process, which we allow to be of adversarial nature. 

We study the evolution over a horizon $T \in \mathbb{N}$ of system \eqref{eq:state_dynamics}, initialized at $x_0 \in \mathbb{R}^n$, when the control actions $u_t$ are computed according to a time-varying linear feedback control policy
\begin{equation}
    \label{eq:control_policy}
    u_t = \sum_{k = 0}^t K_{t,k} x_k\,, ~ \forall t \in \{0 \dots T-1\}\,.
\end{equation}
For convenience, we compactly write signals and causal operators over $T$ as
\begin{equation*}
    \mathbf{x} = \begin{bmatrix}x_0 \\ x_1 \\ \vdots \\ x_{T-1}\end{bmatrix}\,, ~
    \mathbf{u} = \begin{bmatrix}u_0 \\ u_1 \\ \vdots \\ u_{T-1}\end{bmatrix}\,, ~
    \mathbf{w} = \begin{bmatrix}x_0 \\ w_0 \\ \vdots \\ w_{T-2}\end{bmatrix}\,, ~
    \mathbf{K} =
    \begin{bmatrix}
    	K_{0,0} & 0_{m \times n} & \dots & 0_{m \times n}\\
    	K_{1,0} & K_{1,1} & \ddots & \vdots\\
    	\vdots & \vdots & \ddots & 0_{m \times n}\\
    	K_{T-1, 0} & K_{T-1, 1}& \cdots & K_{T-1, T-1}
    \end{bmatrix}\,,
\end{equation*}
and we denote the online cost incurred applying the control sequence $\mathbf{u}$ in response to the disturbance realization $\mathbf{w}$ by 
\begin{equation}
    \label{eq:lqr_cost}
    \operatorname{cost}(\mathbf{w}, \mathbf{u}) = 
    \begin{bmatrix}
        \mathbf{x}\\
        \mathbf{u}
    \end{bmatrix}^\top
    \begin{bmatrix}
        \mathcal{Q} & 0\\
        0 & \mathcal{R}\\
    \end{bmatrix}
    \begin{bmatrix}
        \mathbf{x}\\
        \mathbf{u}
    \end{bmatrix}
    \,,
\end{equation}
where $\mathcal{Q} \succeq 0$ and $\mathcal{R} \succ 0$ represent the cost matrices coupling state and input signals at possibly different time instants, respectively.
 
Our goal is to synthesize a safe control policy in the form of \eqref{eq:control_policy} that exactly minimizes the loss relative to the clairvoyant policy, i.e., the offline control law that selects the globally optimal sequence of control actions with complete foreknowledge of the disturbance realizations. In other words, we wish to minimize the worst-case dynamic regret
\begin{equation}
    \label{eq:dynamic_regret_definition}
    \operatorname{regret}(T, \mathbf{u}) = \max_{\|\mathbf{w}\|_{2} \leq 1} ~
    \left(\operatorname{cost}(\mathbf{w}, \mathbf{u}) - \min_{\mathbf{u}^\prime \in \mathbb{R}^{mT}} ~ \operatorname{cost}\left(\mathbf{w}, \mathbf{u}^{\prime}\right)\right)\,,
\end{equation}
while complying with polytopic safety constraints defined over the system states and inputs as per
\begin{equation}
    \label{eq:safety_constraints}
    \mathcal{H} 
    \begin{bmatrix}
        \mathbf{x}^\top & \mathbf{u}^\top
    \end{bmatrix}^\top
    \leq \mathbf{h}\,, ~
    \forall \mathbf{w} \in \mathbb{W} = \left\{\mathbf{w}^\prime \in \mathbb{R}^{nT} : \mathcal{H}_w \mathbf{w}^\prime \leq \mathbf{h}_w\right\}\,,
\end{equation}
where $\mathbb{W}$ is a compact polytope that contains an open neighborhood of the origin. For the rest of the paper, inequalities involving vectors are to be intended element-wise. As it is standard in the robust control literature \citep{rawlings2017model}, we formulate the following feasibility assumption:
\paragraph{Assumption~1} There exists a causal controller $\mathbf{K}$ satisfying \eqref{eq:safety_constraints} for all perturbations $\mathbf{w} \in \mathbb{W}$.

By competing against the clairvoyant policy in \eqref{eq:dynamic_regret_definition}, we shape the worst-case gain from the disturbance energy to the regret through system-dependent optimal performance weights. Ultimately, this leads to synthesizing control laws that are adaptive, in the sense that they strive to minimize the incurred control cost independently of how the disturbances are generated. 

\begin{remark}
    We rely on closed-loop control policies of the form \eqref{eq:control_policy} to dynamically adjust to the per-instance disturbance realizations. This is beneficial not only to track the performance of the clairvoyant controller, but also to fulfill the safety constraints \eqref{eq:safety_constraints}. Selecting a priori the control actions to be applied, based on the open-loop prediction of the system evolution only, may lead to excessive conservatism, or even infeasibility \citep{bemporad1998reducing}.
\end{remark}
\subsection{System Level Synthesis}
\label{subsec:system_level_synthesis}
We briefly outline the necessary background on the SLS approach to optimal controller synthesis, and we refer to \cite{wang2019system} and \cite{anderson2019system} for a complete discussion. Akin to Youla-based and disturbance-feedback controllers \citep{youla1976modern, goulart2007affine}, the SLS approach shifts the synthesis problem from directly designing the controller to shaping the closed-loop maps from the exogenous disturbance to the state and input signals \citep{furieri2019input, zheng2020equivalence}.

Let $Z$ be the block-downshift operator, namely a matrix with identity matrices along its first block sub-diagonal and zeros elsewhere, and define $\mathcal{A} = \operatorname{blkdiag}(A_0, A_1, \dots, A_{T-2}, 0_{n \times n})$ and $\mathcal{B} = \operatorname{blkdiag}(B_0, B_1, \dots, B_{T-2}, 0_{n \times m})$. Then, the evolution of the input and state trajectories of system \eqref{eq:state_dynamics} over the control horizon can be compactly expressed as
\begin{equation*}
    \mathbf{x} = Z\mathcal{A}\mathbf{x} + Z\mathcal{B}\mathbf{u} + \mathbf{w}\,.
\end{equation*}
Similarly, the closed-loop system behavior under the feedback law $\mathbf{u} = \mathbf{K}\mathbf{x}$ can be described, as a function of the perturbation vector $\mathbf{w}$, through the non-convex relations
\begin{equation}
    \label{eq:closed_loop_behavior_K}
    \begin{split}
        \mathbf{x} &= \left(I - Z\left(\mathcal{A} +  \mathcal{B}\mathbf{K}\right)\right)^{-1} \mathbf{w} = \bm{\Phi}_x \mathbf{w}\,,\\
        \mathbf{u} &= \mathbf{K}\left(I - Z\left(\mathcal{A} +  \mathcal{B}\mathbf{K}\right)\right)^{-1} \mathbf{w} = \bm{\Phi}_u \mathbf{w}\,.
    \end{split}
\end{equation}
Let $\bm{\Phi}_x = \left(I - Z\left(\mathcal{A} +  \mathcal{B}\mathbf{K}\right)\right)^{-1}$ and $\bm{\Phi}_u = \mathbf{K}\left(I - Z\left(\mathcal{A} +  \mathcal{B}\mathbf{K}\right)\right)^{-1}$ denote the system closed-loop responses induced by the controller $\mathbf{K}$ in \eqref{eq:closed_loop_behavior_K}, and observe that these operators inherit a lower block-diagonal causal structure. One can show that there exists a controller $\mathbf{K}$ such that $\mathbf{x} = \bm{\Phi}_x\mathbf{w}$ and $\mathbf{u} = \bm{\Phi}_u\mathbf{w}$ if and only if
\begin{equation}
    \label{eq:sls_affine_subspace_constraints}
    (I-Z\mathcal{A})\bm{\Phi}_x - Z\mathcal{B}\bm{\Phi}_u = I\,;
\end{equation}
we call pairs $\{\bm{\Phi}_x, \bm{\Phi}_u\}$ that satisfy \eqref{eq:sls_affine_subspace_constraints} \textit{achievable}.

Based on \eqref{eq:sls_affine_subspace_constraints}, many optimal control problems of practical interest can be equivalently posed as an optimization over the convex set of system responses $\{\bm{\Phi}_x, \bm{\Phi}_u\}$. For instance, one can express classical unconstrained $\mathcal{H}_2$ and $\mathcal{H}_\infty$ control problems in terms of the achievable closed-loop responses by exploiting a priori assumptions on the disturbance process, i.e., $\mathbf{w} \overset{\text{iid}}{\sim} \mathcal{N}(0, \bm{\Sigma}_w)$, where $\bm{\Sigma}_w \succ 0$ denotes the covariance matrix of $\mathbf{w}$, in the $\mathcal{H}_2$ setting, or $\mathbf{w}$ adversarially chosen in the $\mathcal{H}_\infty$ setting. For the sake of comparison, we report these reformulations \citep{anderson2019system} below on the left and right sides of the page, respectively:
\begin{alignat}{3}
    &~\min_{\bm{\Phi}_x,\bm{\Phi}_u} ~ \norm{
    \begin{bmatrix}
        \mathcal{Q}^{\frac{1}{2}} & 0\\
        0 & \mathcal{R}^{\frac{1}{2}}
    \end{bmatrix}
    \begin{bmatrix}
        \bm{\Phi}_x\\\bm{\Phi}_u
    \end{bmatrix}
    \bm{\Sigma}_w^{\frac{1}{2}}
    }_{F}^2
    \hspace{2cm}
    &&~\min_{\bm{\Phi}_x,\bm{\Phi}_u} ~ \norm{
    \begin{bmatrix}
        \mathcal{Q}^{\frac{1}{2}} & 0\\
        0 & \mathcal{R}^{\frac{1}{2}}
    \end{bmatrix}
    \begin{bmatrix}
        \bm{\Phi}_x\\\bm{\Phi}_u
    \end{bmatrix}
    }_{2 \rightarrow 2}^2
    \nonumber\\
    &\st~ \begin{bmatrix}
        I-Z\mathcal{A} & -Z\mathcal{B}
    \end{bmatrix}
    \begin{bmatrix}
    \bm{\Phi}_x\\\bm{\Phi}_u
    \end{bmatrix} = I\label{eq:sls_affine_subspace_constraints_optimization}\,,
    &&\st~ \begin{bmatrix}
        I-Z\mathcal{A} & -Z\mathcal{B}
    \end{bmatrix}
    \begin{bmatrix}
    \bm{\Phi}_x\\\bm{\Phi}_u
    \end{bmatrix} = I\,,\\
    &\qquad\qquad\quad \bm{\Phi}_x, \bm{\Phi}_u \text{ with causal sparsities,}
    &&\qquad\qquad\quad \bm{\Phi}_x, \bm{\Phi}_u \text{ with causal sparsities.}\label{eq:sls_causal_sparsities_constraints_optimization}
\end{alignat}
Here, $\norm{\cdot}_F$ and $\norm{\cdot}_{2 \rightarrow 2}$ denote the Frobenius and the induced 2-norm of a matrix, respectively. Thanks to convexity, the optimal system responses $\{\bm{\Phi}^\star_x, \bm{\Phi}^\star_u\}$ can be computed efficiently and the corresponding optimal control policy in the form of \eqref{eq:control_policy} can then be recovered by $\mathbf{K}^\star = \bm{\Phi}^\star_u{\bm{\Phi}^\star_x}^{-1}$.

\section{Proposed Methodology}
\label{sec:methodology}
In this section, we first adapt useful results from \cite{hassibi1999indefinite}, and we then present a novel tractable method to design regret-optimal control policies that further comply with hard safety constraints over the system states and inputs. Our idea is to exploit the SLP of achievable closed-loop responses to characterize as solutions of convex optimization problems both the clairvoyant optimal policy, which solves the inner minimization in \eqref{eq:dynamic_regret_definition}, and a safe regret-optimal causal controller.
\subsection{The Clairvoyant Optimal Controller}
\label{subsec:clairvoyant_optimal_controller}
In the full-information setting considered in this paper, there exists a unique noncausal control law that outperforms any other controller for every disturbance realization – the clairvoyant policy. To show this, let $\mathcal{F}$ and $\mathcal{G}$ denote the causal response operators comprising the Markov parameters that encode the linear dynamics \eqref{eq:state_dynamics} as $\mathbf{x} = \mathcal{F}\mathbf{u} + \mathcal{G}\mathbf{w}$. Then, the control cost \eqref{eq:lqr_cost} can be expressed as
\begin{align*}
    \operatorname{cost}(\mathbf{w}, \mathbf{u}) &= \left(\mathcal{F} \mathbf{u} + \mathcal{G} \mathbf{w} \right)^\top \mathcal{Q} \left(\mathcal{F} \mathbf{u} + \mathcal{G} \mathbf{w} \right) + \mathbf{u}^\top \mathcal{R} \mathbf{u} = \mathbf{u}^\top \mathcal{P} \mathbf{u} + 2 \mathbf{u}^\top \mathcal{F}^\top \mathcal{Q} \mathcal{G} \mathbf{w} + \mathbf{w}^\top \mathcal{G}^\top \mathcal{Q} \mathcal{G} \mathbf{w}\,,
\end{align*}
where $\mathcal{P} = \mathcal{R} + \mathcal{F}^\top \mathcal{Q}\mathcal{F} \succ 0$. Observing that $\mathcal{Q}\mathcal{F}\mathcal{P}^{-1}\mathcal{F}^\top \mathcal{Q} + \mathcal{Q}(I + \mathcal{F}\mathcal{R}^{-1}\mathcal{F}^\top\mathcal{Q})^{-1} = \mathcal{Q}$ thanks to the Woodbury matrix identity, the incurred control cost can further be written as
\begin{align*}
    \operatorname{cost}(\mathbf{w}, \mathbf{u}) = [\mathcal{P}\mathbf{u} + \mathcal{F}^\top \mathcal{Q} \mathcal{G} \mathbf{w}]^\top 
    \mathcal{P}^{-1}
    [\mathcal{P}\mathbf{u} + \mathcal{F}^\top \mathcal{Q} \mathcal{G} \mathbf{w}] + \mathbf{w}^\top \mathcal{G}^\top \mathcal{Q} (I + \mathcal{F}\mathcal{R}^{-1}\mathcal{F}^\top \mathcal{Q})^{-1} \mathcal{G} \mathbf{w}\,,
\end{align*}
so to highlight the presence of a first non-negative term and of a second addend that does not depend on $\mathbf{u}$. Solving for the cost-minimizing $\mathbf{u}$ by setting the former term equal to zero, we obtain the following input-output description of the clairvoyant controller:
\begin{equation}
    \label{eq:clairvoyant_controller_definition}
    \mathbf{u}^\star = \argmin_{\mathbf{u}^\prime \in \mathbb{R}^{mT}} ~ \operatorname{cost}\left(\mathbf{w}, \mathbf{u}^{\prime}\right) = -(\mathcal{R} + \mathcal{F}^\top\mathcal{Q} \mathcal{F})^{-1}\mathcal{F}^\top\mathcal{Q}\mathcal{G}\mathbf{w}\,.
\end{equation}
Moreover, the control cost suffered by this offline optimal policy, as a function of the sampled disturbance sequence $\mathbf{w}$, is given by
\begin{equation}
    \label{eq:clairvoyant_controller_cost_incurred}
     \operatorname{cost}\left(\mathbf{w}, \mathbf{u}^\star\right) = \mathbf{w}^\top \mathcal{G}^\top \mathcal{Q}(I + \mathcal{F}\mathcal{R}^{-1}\mathcal{F}^\top\mathcal{Q})^{-1} \mathcal{G} \mathbf{w}\,.
\end{equation}
Note that, despite not having specified a priori any parametric structure on such offline policy, the globally optimal dynamic sequence of control actions selected in hindsight by the clairvoyant policy can be expressed as a noncausal linear function of past and future disturbances \citep{hassibi1999indefinite}.

A state-space description of the clairvoyant controller \eqref{eq:clairvoyant_controller_definition} has recently been derived in \cite{goel2020power} and in \cite{foster2020logarithmic} via dynamic programming. Our first result is that the optimal clairvoyant controller \eqref{eq:clairvoyant_controller_definition} can also be computed by solving a convex SLS problem that imposes no causal constraints on the structure of $\{\bm{\Phi}_x, \bm{\Phi}_u\}$. As we establish in Section~\ref{subsec:extensions}, the formulation we propose allows one to explicitly include safety requirements in the definition of the benchmark clairvoyant policy.
\begin{lemma}
    \label{le:clairvoyant_controller_optimization}
    The closed-loop responses $\{\bm{\Phi}_x^{\textit{nc}}, \bm{\Phi}_u^{\textit{nc}}\}$ associated with the optimal noncausal controller $\mathbf{u^\star}$ defined in \eqref{eq:clairvoyant_controller_definition} can be computed as
    \begin{alignat}{3}
        \label{eq:clairvoyant_controller_optimization}
        \{\bm{\Phi}_x^{\textit{nc}}, \bm{\Phi}_u^{\textit{nc}}\} = &~\argmin_{\bm{\Phi}_x, \bm{\Phi}_u} ~ \norm{
        \begin{bmatrix}
            \mathcal{Q}^{\frac{1}{2}} & 0\\
            0 & \mathcal{R}^{\frac{1}{2}}
        \end{bmatrix}
        \begin{bmatrix}
            \bm{\Phi}_x\\\bm{\Phi}_u
        \end{bmatrix}
        \bm{\Sigma}_w^{\frac{1}{2}}
        }_{F}^2
        \in 
        &&~\argmin_{\bm{\Phi}_x, \bm{\Phi}_u} ~ \norm{
        \begin{bmatrix}
            \mathcal{Q}^{\frac{1}{2}} & 0\\
            0 & \mathcal{R}^{\frac{1}{2}}
        \end{bmatrix}
        \begin{bmatrix}
            \bm{\Phi}_x\\\bm{\Phi}_u
        \end{bmatrix}
        }_{2 \rightarrow 2}^2\\
        &\st ~\eqref{eq:sls_affine_subspace_constraints_optimization}\,, &&\st ~\eqref{eq:sls_affine_subspace_constraints_optimization}\nonumber\,.
    \end{alignat}
    Moreover, the control cost \eqref{eq:clairvoyant_controller_cost_incurred} incurred by the noncausal clairvoyant controller on a specific disturbance vector $\mathbf{w}$ is given by
    \begin{equation}
        \label{eq:clairvoyant_controller_cost_optimization}
        \operatorname{cost}\left(\mathbf{w}, \mathbf{u}^\star\right) = \norm{
        \begin{bmatrix}
            \mathcal{Q}^{\frac{1}{2}} & 0\\
            0 & \mathcal{R}^{\frac{1}{2}}
        \end{bmatrix}
        \begin{bmatrix}
            \bm{\Phi}^{\textit{nc}}_x\\\bm{\Phi}^{\textit{nc}}_u
        \end{bmatrix}
        \mathbf{w}
        }_{2}^2\,.
    \end{equation}
\end{lemma}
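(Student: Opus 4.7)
}

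The overall strategy is to (i) recognize that dropping the causal sparsity constraint \eqref{eq:sls_causal_sparsities_constraints_optimization} from the SLS parametrization yields a bijection between the affine set defined by \eqref{eq:sls_affine_subspace_constraints_optimization} and the family of all (possibly noncausal) linear maps from $\mathbf{w}$ to $(\mathbf{x}, \mathbf{u})$ compatible with the dynamics \eqref{eq:state_dynamics}, and (ii) exploit the fact that the clairvoyant policy $\mathbf{u}^\star$ in \eqref{eq:clairvoyant_controller_definition} minimizes $\operatorname{cost}(\mathbf{w}, \mathbf{u})$ pointwise in $\mathbf{w}$, so that its associated closed-loop responses must also solve any cost aggregation, including both the Frobenius and the induced $2$-norm problems.

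First, I would make the SLS parametrization explicit without causal sparsity. From the dynamics \eqref{eq:state_dynamics} we have $\mathbf{x} = (I - Z\mathcal{A})^{-1}(Z\mathcal{B}\mathbf{u} + \mathbf{w}) = \mathcal{F}\mathbf{u} + \mathcal{G}\mathbf{w}$ with $\mathcal{F} = (I-Z\mathcal{A})^{-1}Z\mathcal{B}$ and $\mathcal{G} = (I-Z\mathcal{A})^{-1}$. Hence for any linear map $\mathbf{u} = \bm{\Phi}_u \mathbf{w}$ (without causality requirements on $\bm{\Phi}_u$), the state satisfies $\mathbf{x} = \bm{\Phi}_x \mathbf{w}$ with $\bm{\Phi}_x = \mathcal{G} + \mathcal{F}\bm{\Phi}_u$, and the pair $\{\bm{\Phi}_x, \bm{\Phi}_u\}$ obeys $(I-Z\mathcal{A})\bm{\Phi}_x - Z\mathcal{B}\bm{\Phi}_u = I$. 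Conversely, any pair satisfying this affine equation defines an admissible input-output map. Defining $\bm{\Phi}_u^{\textit{nc}} = -(\mathcal{R} + \mathcal{F}^\top \mathcal{Q}\mathcal{F})^{-1}\mathcal{F}^\top \mathcal{Q}\mathcal{G}$ and $\bm{\Phi}_x^{\textit{nc}} = \mathcal{G} + \mathcal{F}\bm{\Phi}_u^{\textit{nc}}$ directly reproduces \eqref{eq:clairvoyant_controller_definition}.

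Next, I would establish the pointwise optimality argument. For any feasible pair $\{\bm{\Phi}_x, \bm{\Phi}_u\}$ in the feasible set of \eqref{eq:clairvoyant_controller_optimization}, applying the derivation preceding \eqref{eq:clairvoyant_controller_definition} yields, for every realization $\mathbf{w}$,
\begin{equation*}
    \operatorname{cost}(\mathbf{w}, \bm{\Phi}_u \mathbf{w}) \;\geq\; \operatorname{cost}(\mathbf{w}, \bm{\Phi}_u^{\textit{nc}} \mathbf{w}) \;=\; \mathbf{w}^\top \mathcal{G}^\top \mathcal{Q}(I+\mathcal{F}\mathcal{R}^{-1}\mathcal{F}^\top \mathcal{Q})^{-1}\mathcal{G}\mathbf{w},
\end{equation*}
with equality for the noncausal clairvoyant responses. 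The cost expression \eqref{eq:clairvoyant_controller_cost_optimization} then follows from the identity $\operatorname{cost}(\mathbf{w}, \mathbf{u}) = \mathbf{x}^\top \mathcal{Q}\mathbf{x} + \mathbf{u}^\top \mathcal{R}\mathbf{u}$ evaluated at $\mathbf{x} = \bm{\Phi}_x^{\textit{nc}}\mathbf{w}$ and $\mathbf{u} = \bm{\Phi}_u^{\textit{nc}}\mathbf{w}$, which equals the squared $2$-norm displayed on the right-hand side of \eqref{eq:clairvoyant_controller_cost_optimization}.

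Finally, I would lift this pointwise inequality to the two aggregated objectives. For the Frobenius norm problem, interpret the objective as $\operatorname{Tr}(\cdots) = \mathbb{E}_{\mathbf{w}\sim\mathcal{N}(0, \bm{\Sigma}_w)}[\operatorname{cost}(\mathbf{w}, \bm{\Phi}_u\mathbf{w})]$; taking expectations preserves the pointwise inequality, and since $\bm{\Sigma}_w \succ 0$ charges every direction, $\{\bm{\Phi}_x^{\textit{nc}}, \bm{\Phi}_u^{\textit{nc}}\}$ is the \emph{unique} minimizer, justifying the equality. For the induced $2$-norm problem, the objective is $\sup_{\|\mathbf{w}\|_2 \leq 1}\operatorname{cost}(\mathbf{w}, \bm{\Phi}_u\mathbf{w})$, and taking the supremum again preserves the pointwise inequality, so $\{\bm{\Phi}_x^{\textit{nc}}, \bm{\Phi}_u^{\textit{nc}}\}$ belongs to the argmin set (membership rather than equality, since the $2$-norm may admit multiple minimizers attaining the same worst-case value). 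The main subtlety in the whole proof is the distinction between the uniqueness in the $\mathcal{H}_2$-style problem versus set-valued optimality in the $\mathcal{H}_\infty$-style problem, which is precisely the reason for the notational contrast ($=$ vs.\ $\in$) in the statement \eqref{eq:clairvoyant_controller_optimization}.
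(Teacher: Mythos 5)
Your proposal is correct and follows essentially the same route as the paper: identify the two objectives as $\mathbb{E}_{\mathbf{w}}[\operatorname{cost}(\mathbf{w},\mathbf{u})]$ and $\max_{\|\mathbf{w}\|_2\leq 1}\operatorname{cost}(\mathbf{w},\mathbf{u})$, lift the pointwise optimality of the clairvoyant policy through the expectation and the supremum to get membership in both argmin sets, and then argue uniqueness only for the Frobenius problem (the paper does this via strict convexity in $\bm{\Phi}_u$ under $\mathcal{R}\succ 0$ and $\bm{\Sigma}_w\succ 0$, which is equivalent to your ``charges every direction'' argument). Your added explicit parametrization of the noncausal affine set and the formulas for $\{\bm{\Phi}_x^{\textit{nc}},\bm{\Phi}_u^{\textit{nc}}\}$ are consistent with, though more detailed than, what the paper writes.
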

\begin{proof}
    Observe that the objective functions of the optimization problems in \eqref{eq:clairvoyant_controller_optimization} are equivalent to $\mathbb{E}_{\mathbf{w}}[\operatorname{cost}(\mathbf{w}, \mathbf{u})]$ and $\max_{\|\mathbf{w}\|_{2} \leq 1} ~ \operatorname{cost}(\mathbf{w}, \mathbf{u})$, respectively. Recall that the clairvoyant control law $\mathbf{u^\star}$ defined in \eqref{eq:clairvoyant_controller_definition} is the unique offline policy that minimizes $\operatorname{cost}(\mathbf{w}, \mathbf{u})$ for every $\mathbf{w}$. Hence, for every control sequence $\mathbf{u}$, linearity of the expectation operator implies that
    \begin{equation*}
        \mathbb{E}_{\mathbf{w}}[\operatorname{cost}(\mathbf{w}, \mathbf{u})] - \mathbb{E}_{\mathbf{w}}[\operatorname{cost}(\mathbf{w}, \mathbf{u^\star})] = \mathbb{E}_{\mathbf{w}} [\operatorname{cost}(\mathbf{w}, \mathbf{u}) - \operatorname{cost}(\mathbf{w}, \mathbf{u}^\star)] \geq 0\,.
    \end{equation*}
    Similarly, $\max_{\|\mathbf{w}\|_{2} \leq 1} ~ \operatorname{cost}(\mathbf{w}, \mathbf{u}) \geq \max_{\|\mathbf{w}\|_{2} \leq 1} ~ \operatorname{cost}(\mathbf{w}, \mathbf{u^\star})$ since $\operatorname{cost}(\mathbf{w}, \mathbf{u}) \geq \operatorname{cost}(\mathbf{w}, \mathbf{u^\star})$ point-wise for every $\mathbf{w}$ and every $\mathbf{u}$.
    As the clairvoyant policy $\mathbf{u^\star}$ minimizes both $\mathbb{E}_{\mathbf{w}}[\operatorname{cost}(\mathbf{w}, \mathbf{u})]$ and $\max_{\|\mathbf{w}\|_{2} \leq 1} ~ \operatorname{cost}(\mathbf{w}, \mathbf{u})$, we deduce that the corresponding closed-loop responses $\{\bm{\Phi}_x^{\textit{nc}}, \bm{\Phi}_u^{\textit{nc}}\}$ belong to the set of minimizers of both optimization problems in \eqref{eq:clairvoyant_controller_optimization}. Moreover, we note that $\norm{\mathcal{Q}^{\frac{1}{2}} \bm{\Phi}_x \bm{\Sigma}_w^{\frac{1}{2}}}_{F}^2 + \norm{\mathcal{R}^{\frac{1}{2}} \bm{\Phi}_u \bm{\Sigma}_w^{\frac{1}{2}}}_{F}^2$, with $\bm{\Phi}_x = (I-Z\mathcal{A})^{-1}(I + Z\mathcal{B}\bm{\Phi}_u)$ as per \eqref{eq:sls_affine_subspace_constraints},
    is strictly convex in $\bm{\Phi}_u$ thanks to $\mathcal{Q} \succeq 0\,, \mathcal{R} \succ 0$ and $\bm{\Sigma}_w \succ 0$. Hence, the pair $\{\bm{\Phi}_x^{\textit{nc}}, \bm{\Phi}_u^{\textit{nc}}\}$ constitutes the unique global minimizer of the optimization problem on the left-hand side of \eqref{eq:clairvoyant_controller_optimization}. Lastly, \eqref{eq:clairvoyant_controller_cost_optimization} follows by substituting $\mathbf{x}^{\textit{nc}} = \bm{\Phi}_x^{\textit{nc}}\mathbf{w}$ and $\mathbf{u}^{\textit{nc}} = \bm{\Phi}_u^{\textit{nc}}\mathbf{w}$ in \eqref{eq:lqr_cost}.
\end{proof}
Note that the optimization problems in \eqref{eq:clairvoyant_controller_optimization} mirror the classical $\mathcal{H}_2$ and $\mathcal{H}_\infty$ control formulations presented in Section \ref{subsec:system_level_synthesis}. Indeed, the only difference is the absence of the sparsity constraints \eqref{eq:sls_causal_sparsities_constraints_optimization} on the noncausal closed-loop responses $\{\bm{\Phi}_x^{\textit{nc}}, \bm{\Phi}_u^{\textit{nc}}\}$ corresponding with the offline optimal policy.

Linearity of the clairvoyant policy \eqref{eq:clairvoyant_controller_definition} with respect to the disturbance realizations is central to the regret-optimal synthesis approach proposed in \cite{goel2020regret} in the absence of safety requirements. However, finding the proper change of variables that allows to reduce a regret-suboptimal control problem to a suboptimal $\mathcal{H}_\infty$ problem might be nontrivial, as it requires to analytically characterize a causal factorization of $\gamma^2I + \mathcal{G}^\top \mathcal{Q}\left(I + \mathcal{F} \mathcal{R}^{-1} \mathcal{F}^\top \mathcal{Q} \right)^{-1}\mathcal{G}$, where $\gamma$ denotes the performance level to be tuned iteratively. Indeed, the method of \cite{goel2020regret} involves repeated applications of the whitening property of Kalman filters \citep{kailath2000linear}.

\subsection{The Safe Regret-Optimal Controller}
\label{subsec:safe_regret_optimal_controller}
We now show that linearity of the optimal offline policy \eqref{eq:clairvoyant_controller_optimization} is also key to computing a safe regret-optimal controller with efficient numerical programming techniques. Formulating the nested minimization of \eqref{eq:dynamic_regret_definition} as a convex optimization problem constitutes our main result.
\begin{theorem}
    \label{th:safe_regret_optimal_control}
    Let Assumption~1 hold.
    Consider the evolution of the linear time-varying system \eqref{eq:state_dynamics} over a horizon of length $T \in \mathbb{N}$, and the constrained regret-optimal control problem:
    \begin{subequations}
    \label{prob:safe_regret_optimal_original}
    \begin{alignat}{3}
    \{\bm{\Phi}_x^{\textit{sr}}, \bm{\Phi}_u^{\textit{sr}}\} \in
    &~\argmin_{\bm{\Phi}_x,\bm{\Phi}_u} ~ && \hspace{-0.25cm} \operatorname{regret}(T, \mathbf{u}) \label{eq:safe_regret_optimal_problem_original_th3}\\
        &\st ~&&\eqref{eq:sls_affine_subspace_constraints_optimization} - \eqref{eq:sls_causal_sparsities_constraints_optimization}\,,\nonumber\\
        & &&\mathcal{H} 
        \begin{bmatrix}
            \bm{\Phi}_x^\top & \bm{\Phi}_u^\top
        \end{bmatrix}^\top
        \mathbf{w} \leq \mathbf{h}\,, ~ \forall \mathbf{w} : \mathcal{H}_w \mathbf{w} \leq \mathbf{h}_w\label{eq:safety_constraints_compact_Phi}\,.
    \end{alignat}
    \end{subequations}
    Then, \eqref{prob:safe_regret_optimal_original} is equivalently formulated as the following convex optimization problem:
    \begin{subequations}
    \label{prob:safe_regret_optimal_sdp}
    \begin{alignat}{3}
        \{\bm{\Phi}_x^{\textit{sr}}, \bm{\Phi}_u^{\textit{sr}}\} \in &~\argmin_{\bm{\Phi}_x,\bm{\Phi}_u, \mathbf{Z}, \lambda} ~&&\lambda\label{eq:safe_regret_optimal_problem_sdp_th3}\\
        &\st~ &&\eqref{eq:sls_affine_subspace_constraints_optimization} - \eqref{eq:sls_causal_sparsities_constraints_optimization}\,, ~ \lambda > 0\,, \nonumber\\
        & &&\mathbf{Z}^\top \mathbf{h}_w \leq \mathbf{h}\,, ~
        \mathcal{H}
        \begin{bmatrix}
            \bm{\Phi}_x\\\bm{\Phi}_u
        \end{bmatrix} = \mathbf{Z}^\top \mathcal{H}_w\,, ~ \mathbf{Z}_{ij} \geq 0\,,\label{eq:sdp_dual_safety_constraints}\\
        & &&
        \begin{bmatrix}
            I & & \begin{bmatrix}
            \mathcal{Q}^{\frac{1}{2}} & 0\\
            0 & \mathcal{R}^{\frac{1}{2}}
        \end{bmatrix}\begin{bmatrix}
            \bm{\Phi}_x\\\bm{\Phi}_u
        \end{bmatrix}\\
        \begin{bmatrix}
            \bm{\Phi}_x\\\bm{\Phi}_u
        \end{bmatrix}^\top
        \begin{bmatrix}
            \mathcal{Q}^{\frac{1}{2}} & 0\\
            0 & \mathcal{R}^{\frac{1}{2}}
        \end{bmatrix} & & \lambda I + \begin{bmatrix}
            \bm{\Phi}_x^{\textit{nc}}\\\bm{\Phi}_u^{\textit{nc}}
        \end{bmatrix}^{\top}
        \begin{bmatrix}
            \mathcal{Q} & 0\\
            0 & \mathcal{R}
        \end{bmatrix}
        \begin{bmatrix}
            \bm{\Phi}_x^{\textit{nc}}\\\bm{\Phi}_u^{\textit{nc}}
        \end{bmatrix}
        \end{bmatrix} \succeq 0\label{eq:sdp_schur_constraints}\,.
    \end{alignat}
    \end{subequations}
\end{theorem}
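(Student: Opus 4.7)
The plan is to derive the SDP \eqref{prob:safe_regret_optimal_sdp} from \eqref{prob:safe_regret_optimal_original} by independently reformulating the two non-standard pieces of the latter: the semi-infinite safety constraint \eqref{eq:safety_constraints_compact_Phi}, and the worst-case dynamic regret objective \eqref{eq:safe_regret_optimal_problem_original_th3}. The affine SLS constraints \eqref{eq:sls_affine_subspace_constraints_optimization}--\eqref{eq:sls_causal_sparsities_constraints_optimization} carry over verbatim to the SDP, so no additional work is required for them.

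For the semi-infinite safety constraint, I would apply linear programming duality row by row. The requirement that $\mathcal{H}[\bm{\Phi}_x^\top,\bm{\Phi}_u^\top]^\top \mathbf{w} \leq \mathbf{h}$ hold for every $\mathbf{w}\in\mathbb{W}$ is equivalent to requiring, for each $i$, that the maximum of the linear function $\mathcal{H}_i[\bm{\Phi}_x^\top,\bm{\Phi}_u^\top]^\top \mathbf{w}$ over the polytope $\mathbb{W}$ not exceed $\mathbf{h}_i$. Since $\mathbb{W}$ is compact and contains an open neighborhood of the origin, Slater's condition holds and strong LP duality yields an equivalent reformulation in terms of a dual vector $\mathbf{z}_i \geq 0$ satisfying $\mathbf{z}_i^\top \mathcal{H}_w = \mathcal{H}_i[\bm{\Phi}_x^\top,\bm{\Phi}_u^\top]^\top$ and $\mathbf{z}_i^\top \mathbf{h}_w \leq \mathbf{h}_i$. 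Collecting these duals as columns of a matrix $\mathbf{Z}$ reproduces exactly the linear constraints \eqref{eq:sdp_dual_safety_constraints}.

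For the regret objective, Lemma~\ref{le:clairvoyant_controller_optimization} gives the explicit expression $\operatorname{cost}(\mathbf{w},\mathbf{u}^\star) = \mathbf{w}^\top (\bm{\Phi}^{\textit{nc}})^\top \mathcal{C}\, \bm{\Phi}^{\textit{nc}} \mathbf{w}$, where $\mathcal{C} = \operatorname{blkdiag}(\mathcal{Q},\mathcal{R})$ and $\bm{\Phi}^{\textit{nc}}$ stacks the noncausal closed-loop responses. Analogously, $\operatorname{cost}(\mathbf{w},\mathbf{u}) = \mathbf{w}^\top \bm{\Phi}^\top \mathcal{C}\, \bm{\Phi}\, \mathbf{w}$ with $\bm{\Phi}=[\bm{\Phi}_x^\top,\bm{\Phi}_u^\top]^\top$. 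Because $\mathbf{u}^\star$ is point-wise optimal, the symmetric matrix $\bm{\Phi}^\top \mathcal{C}\, \bm{\Phi} - (\bm{\Phi}^{\textit{nc}})^\top \mathcal{C}\, \bm{\Phi}^{\textit{nc}}$ is positive semidefinite, so $\operatorname{regret}(T,\mathbf{u}) = \lambda_{\max}\!\left(\bm{\Phi}^\top \mathcal{C}\, \bm{\Phi} - (\bm{\Phi}^{\textit{nc}})^\top \mathcal{C}\, \bm{\Phi}^{\textit{nc}}\right)$. Passing to epigraph form replaces the objective by $\lambda$ subject to $\lambda I + (\bm{\Phi}^{\textit{nc}})^\top \mathcal{C}\, \bm{\Phi}^{\textit{nc}} - \bm{\Phi}^\top \mathcal{C}\, \bm{\Phi} \succeq 0$, and a Schur complement with respect to the identity block, using the factorization $\mathcal{C} = \mathcal{C}^{1/2\,\top} \mathcal{C}^{1/2}$ with $\mathcal{C}^{1/2} = \operatorname{blkdiag}(\mathcal{Q}^{1/2},\mathcal{R}^{1/2})$, linearizes the quadratic-in-$\bm{\Phi}$ inequality, producing precisely \eqref{eq:sdp_schur_constraints}.

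I expect the main subtlety to be justifying these reductions rigorously rather than carrying them out. For the safety step, I have to confirm that the open-neighborhood assumption on $\mathbb{W}$ guarantees Slater's condition uniformly for every row, and that repacking the per-row dual vectors into the matrix $\mathbf{Z}$ preserves the equivalence of feasible sets. For the regret step, I need to exploit the point-wise optimality asserted in Lemma~\ref{le:clairvoyant_controller_optimization} to pass from the quadratic form to its largest eigenvalue, and to verify that the condition $\lambda > 0$ does not exclude the optimum. Once these equivalences are in place, a direct comparison of the objectives and constraints identifies the minimizers of \eqref{prob:safe_regret_optimal_original} with those of \eqref{prob:safe_regret_optimal_sdp}.
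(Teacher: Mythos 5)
Your proposal is correct and follows essentially the same route as the paper: LP duality applied row-wise to the semi-infinite safety constraint with the duals collected into $\mathbf{Z}$, and the regret rewritten as $\lambda_{\operatorname{max}}(\Delta)$ via point-wise optimality of the clairvoyant policy, then handled by an epigraph variable and a Schur complement. The only cosmetic difference is your appeal to Slater's condition, which is unnecessary for strong LP duality over a nonempty compact polytope; the paper simply invokes duality directly.
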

\begin{proof}
    In light of \eqref{eq:clairvoyant_controller_cost_optimization}, \eqref{prob:safe_regret_optimal_original}
    can be equivalently expressed as
    \begin{align}
        &~\min_{\bm{\Phi}_x, \bm{\Phi}_u}  ~\max_{\|\mathbf{w}\|_{2} \leq 1}
        ~ \mathbf{w}^\top 
        \underbrace{
        \left(
        \begin{bmatrix}
            \bm{\Phi}_x\\\bm{\Phi}_u
        \end{bmatrix}^{\top}
        \begin{bmatrix}
            \mathcal{Q} & 0\\
            0 & \mathcal{R}
        \end{bmatrix}
        \begin{bmatrix}
            \bm{\Phi}_x\\\bm{\Phi}_u
        \end{bmatrix} 
        -
        \begin{bmatrix}
            \bm{\Phi}_x^{\textit{nc}}\\\bm{\Phi}_u^{\textit{nc}}
        \end{bmatrix}^{\top}
        \begin{bmatrix}
            \mathcal{Q} & 0\\
            0 & \mathcal{R}
        \end{bmatrix}
        \begin{bmatrix}
            \bm{\Phi}_x^{\textit{nc}}\\\bm{\Phi}_u^{\textit{nc}}
        \end{bmatrix} 
        \right)
        }_{\Delta(\bm{\Phi}_x, \bm{\Phi}_u)}
        \mathbf{w}
        \label{eq:dynamic_regret_difference_quadratic_form_delta_definition}\\
        &\st ~\eqref{eq:sls_affine_subspace_constraints_optimization} - \eqref{eq:sls_causal_sparsities_constraints_optimization}\,, ~ \eqref{eq:safety_constraints_compact_Phi}\nonumber\,.
    \end{align}
    
    We first proceed to derive a tractable formulation for the objective function in \eqref{eq:dynamic_regret_difference_quadratic_form_delta_definition}. Since the offline optimal controller attains minimum control cost on every perturbation sequence, we have that $\Delta(\bm{\Phi}_x, \bm{\Phi}_u) \succeq 0$ for all $\{\bm{\Phi}_x, \bm{\Phi}_u\}$. Consequently, there exists a matrix $\delta(\bm{\Phi}_x, \bm{\Phi}_u) \succeq 0$ such that $\Delta = \delta^\top \delta$,\footnote{Such a factorization can be computed from the eigendecomposition $\Delta = V \Lambda V^{-1}$. Since $\Delta$ is positive semidefinite, its eigenvalues are non-negative and its eigenvectors are orthogonal, hence $\Delta = V \Lambda V^\top = V \Lambda^{\frac{1}{2}} (V \Lambda^{\frac{1}{2}})^{\top} = \delta^\top \delta$.} where we have omitted the dependence of $\Delta$ and $\delta$ on $\{\bm{\Phi}_x, \bm{\Phi}_u\}$ to ease readability. Then, leveraging well-known properties of induced matrix norms, we obtain
    \begin{equation*}
        \max_{\|\mathbf{w}\|_{2} \leq 1}
        ~ \mathbf{w}^\top \Delta \mathbf{w} = \max_{\|\mathbf{w}\|_{2} \leq 1} ~ \norm{\delta \mathbf{w}}_2^2 = \norm{\delta}_{2 \rightarrow 2}^2 = \sigma^2_{\operatorname{max}}\left(\delta\right) = \lambda_{\operatorname{max}}\left(\Delta\right)\,,
    \end{equation*}
    where $\sigma_{\operatorname{max}}(\cdot)$ and $\lambda_{\operatorname{max}}(\cdot)$ denote the largest singular value and the largest eigenvalue of a matrix, respectively. Hence, minimizing dynamic regret corresponds to minimizing the maximum eigenvalue of $\Delta$, which depends quadratically on $\{\bm{\Phi}_x, \bm{\Phi}_u\}$. Building upon classical results on semidefinite programming for eigenvalue minimization (see, e.g., Section~2.2 in \cite{boyd1994linear}), $\lambda_{\operatorname{max}}(\Delta)$ is equivalently computed as $\min_{\lambda} \lambda ~ \st ~ \lambda I - \Delta \succeq 0\,, ~ \lambda > 0$. Recalling the definition of $\Delta$ in \eqref{eq:dynamic_regret_difference_quadratic_form_delta_definition} and exploiting the Schur complement, we obtain the desired expression \eqref{eq:sdp_schur_constraints}.
    
    For the safety constraints, we apply dualization to eliminate the universal quantifier from \eqref{eq:safety_constraints_compact_Phi}. In particular, we recognize that each row of $\max_{\mathbf{w} \in \mathbb{W}} ~ \mathcal{H} \begin{bmatrix}
        \bm{\Phi}_x^\top & \bm{\Phi}_u^\top
    \end{bmatrix}^\top  \mathbf{w}$ is a linear optimization problem \citep{goulart2007affine}. Exploiting duality, we reformulate the constraints as
    \begin{alignat*}{3}
        \max_{\mathbf{w} \in \mathbb{W}} ~ \left(
        \mathcal{H}
        \begin{bmatrix}
            \bm{\Phi}_x\\
            \bm{\Phi}_u
        \end{bmatrix}\right)_i \mathbf{w} = &~\min_{\mathbf{z}_i} ~ \mathbf{h}_w^\top \mathbf{z}_i\,,\\ 
        &\st ~\mathbf{z}_i \geq 0\,, ~ \mathcal{H}_w^\top \mathbf{z}_i = \left(
        \mathcal{H}
        \begin{bmatrix}
            \bm{\Phi}_x\\
            \bm{\Phi}_u
        \end{bmatrix}
        \right)_i^\top\,, 
    \end{alignat*}
    where the vector $\mathbf{z}_i$ represents the dual vector associated
    with the $i$-th row of the maximization. Finally, by combining the dual variables that arise from each row into the matrix $\mathbf{Z}$, we obtain the linear constraints \eqref{eq:sdp_dual_safety_constraints}.
\end{proof}
We note that the method in \cite{goel2020regret} allows computing the unconstrained regret-optimal controller by solving a series of Riccati recursions. Instead, our approach to the more general constrained case relies on solving \eqref{prob:safe_regret_optimal_sdp} through semidefinite programming, which poses challenges to scalability, in general. We refer the interested reader to \cite{ahmadi2019dsos, zheng2017fast} and the references therein for state-of-the-art techniques that exploit diagonal dominance and chordal sparsity to improve scalability. Last, notice that a safe regret-optimal policy in the form of \eqref{eq:control_policy} is recovered by $\mathbf{K}^{\textit{sr}} = \bm{\Phi}_u^{\textit{sr}} {\bm{\Phi}_x^{\textit{sr}}}^{-1}$.

\subsection{Competing Against Safety-Aware Control Benchmarks}
\label{subsec:extensions}
The safe regret-optimal control law characterized in Theorem~\ref{th:safe_regret_optimal_control} competes against the unconstrained clairvoyant policy \eqref{eq:clairvoyant_controller_optimization}. Alternatively, one could aim to minimize the loss relative to an optimal safety-aware noncausal policy that selects the control actions with complete knowledge of past and future disturbances, but has to obey the same safety constraints as the online controller. We note that a similar idea has recently been studied in \cite{li2021safe}, which aims to bound the regret with respect to the best safe state-feedback causal controller in hindsight. Instead, our proposal is to synthesize safe control laws that minimize regret against safe clairvoyant policies through convex programming techniques. Specifically, we show that the optimization perspective of Lemma~\ref{le:clairvoyant_controller_optimization} and Theorem~\ref{th:safe_regret_optimal_control} can also be used for defining safety-aware control benchmarks and, subsequently, safe regret-optimal policies that compete against them. 
\begin{corollary}
    \label{cor:safe_regret_safe_benchmark}
    Let Assumption~1 hold. The closed-loop responses $\{\bm{\Phi}_x^{\textit{snc}}, \bm{\Phi}_u^{\textit{snc}}\}$ associated with a safe clairvoyant policy that is optimal either in the $\mathcal{H}_2$ or in the $\mathcal{H}_\infty$ sense are computed by solving
    \begin{alignat}{3}
        &~\min_{\bm{\Phi}_x,\bm{\Phi}_u, \mathbf{Z}} ~ \norm{
        \begin{bmatrix}
            \mathcal{Q}^{\frac{1}{2}} & 0\\
            0 & \mathcal{R}^{\frac{1}{2}}
        \end{bmatrix}
        \begin{bmatrix}
            \bm{\Phi}_x\\\bm{\Phi}_u
        \end{bmatrix}
        \bm{\Sigma}_w^{\frac{1}{2}}
        }_{F}^2
        \hspace{0.75cm}\text{and}\hspace{0.75cm}
        &&~\min_{\bm{\Phi}_x,\bm{\Phi}_u, \mathbf{Z}} ~ \norm{
        \begin{bmatrix}
            \mathcal{Q}^{\frac{1}{2}} & 0\\
            0 & \mathcal{R}^{\frac{1}{2}}
        \end{bmatrix}
        \begin{bmatrix}
            \bm{\Phi}_x\\\bm{\Phi}_u
        \end{bmatrix}
        }_{2 \rightarrow 2}^2
        \label{eq:safe_benchmark_optimal_H2_Hinfty}\\
        &\st~ \eqref{eq:sls_affine_subspace_constraints_optimization}\,, ~ \eqref{eq:sdp_dual_safety_constraints}\,,
        &&\st~ \eqref{eq:sls_affine_subspace_constraints_optimization}\,, ~ \eqref{eq:sdp_dual_safety_constraints}\nonumber\,,
    \end{alignat}
    respectively. Moreover, a safe regret-optimal policy that competes against a safe clairvoyant controller is computed by solving \eqref{prob:safe_regret_optimal_sdp}, provided that 
    $\{\bm{\Phi}_x^{\textit{nc}}, \bm{\Phi}_u^{\textit{nc}}\}$ is replaced by $\{\bm{\Phi}_x^{\textit{snc}}, \bm{\Phi}_u^{\textit{snc}}\}$ in \eqref{eq:sdp_schur_constraints}.
\end{corollary}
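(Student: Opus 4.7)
The plan is to prove the two claims in turn, leveraging the arguments already developed in Lemma~\ref{le:clairvoyant_controller_optimization} and Theorem~\ref{th:safe_regret_optimal_control}. Since the safe clairvoyant benchmark is constructed via the same SLS parametrization and the safety constraints are handled by the same duality argument as in Theorem~\ref{th:safe_regret_optimal_control}, the entire proof is essentially a recombination of ingredients already in the paper.

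For the first claim, I would start from the identities, already established in the proof of Lemma~\ref{le:clairvoyant_controller_optimization}, that under $\mathbf{x} = \bm{\Phi}_x \mathbf{w}$ and $\mathbf{u} = \bm{\Phi}_u \mathbf{w}$ the two objectives in \eqref{eq:safe_benchmark_optimal_H2_Hinfty} coincide respectively with $\mathbb{E}_{\mathbf{w}}[\operatorname{cost}(\mathbf{w},\mathbf{u})]$ under $\mathbf{w}\sim\mathcal{N}(0,\bm{\Sigma}_w)$ and with $\max_{\|\mathbf{w}\|_2\leq 1}\operatorname{cost}(\mathbf{w},\mathbf{u})$. The SLS constraint \eqref{eq:sls_affine_subspace_constraints_optimization} parametrizes the set of admissible, possibly noncausal, closed-loop maps; crucially, the causal sparsity constraint \eqref{eq:sls_causal_sparsities_constraints_optimization} is omitted, exactly as in Lemma~\ref{le:clairvoyant_controller_optimization}. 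The safety requirement \eqref{eq:safety_constraints} is then translated into the linear constraints \eqref{eq:sdp_dual_safety_constraints} through the same robust-optimization dualization performed in the second half of the proof of Theorem~\ref{th:safe_regret_optimal_control}. Put together, \eqref{eq:safe_benchmark_optimal_H2_Hinfty} computes the safe $\mathcal{H}_2$ and safe $\mathcal{H}_\infty$ noncausal optimal closed-loop responses.

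For the second claim, I would replay the proof of Theorem~\ref{th:safe_regret_optimal_control} with $\{\bm{\Phi}_x^{\textit{nc}}, \bm{\Phi}_u^{\textit{nc}}\}$ replaced by $\{\bm{\Phi}_x^{\textit{snc}}, \bm{\Phi}_u^{\textit{snc}}\}$ everywhere. Since $\mathbf{u}^{\textit{snc}}=\bm{\Phi}_u^{\textit{snc}}\mathbf{w}$ is still a linear function of the disturbance, the suffered cost $\operatorname{cost}(\mathbf{w},\mathbf{u}^{\textit{snc}})$ remains a quadratic form in $\mathbf{w}$, and the worst-case regret reduces to $\max_{\|\mathbf{w}\|_2\leq 1}\mathbf{w}^\top \Delta_{\textit{snc}}\mathbf{w}$, with $\Delta_{\textit{snc}}$ defined analogously to $\Delta$ in \eqref{eq:dynamic_regret_difference_quadratic_form_delta_definition}. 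The identity $\max_{\|\mathbf{w}\|_2\leq 1}\mathbf{w}^\top M\mathbf{w}=\lambda_{\operatorname{max}}(M)$ for any symmetric $M$, combined with the Schur complement, then yields the modified SDP \eqref{prob:safe_regret_optimal_sdp}; the safety constraints are again dualized into \eqref{eq:sdp_dual_safety_constraints}.

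The main subtlety I anticipate is that the argument of Theorem~\ref{th:safe_regret_optimal_control} used $\Delta\succeq 0$ to write the convenient factorization $\Delta=\delta^\top\delta$, justified there by the pointwise optimality of the \emph{unconstrained} clairvoyant policy. For the safe benchmark $\{\bm{\Phi}_x^{\textit{snc}},\bm{\Phi}_u^{\textit{snc}}\}$, optimality holds only in the $\mathcal{H}_2$ or $\mathcal{H}_\infty$ sense within the class of safe noncausal policies, so there is no immediate guarantee that $\Delta_{\textit{snc}}\succeq 0$ along the feasible causal, safe $\{\bm{\Phi}_x,\bm{\Phi}_u\}$. This, however, does not compromise the SDP reduction: both $\max_{\|\mathbf{w}\|_2\leq 1}\mathbf{w}^\top\Delta_{\textit{snc}}\mathbf{w}=\lambda_{\operatorname{max}}(\Delta_{\textit{snc}})$ and the eigenvalue-minimization identity $\min\{\lambda:\lambda I-\Delta_{\textit{snc}}\succeq 0\}=\lambda_{\operatorname{max}}(\Delta_{\textit{snc}})$ hold for any symmetric matrix. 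The factorization step in the original proof is therefore an expository convenience rather than a load-bearing hypothesis, and the argument transfers verbatim once this is clarified, yielding the stated modification of \eqref{prob:safe_regret_optimal_sdp}.
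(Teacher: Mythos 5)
Your proposal is correct and matches the paper's (implicit) argument: the paper states Corollary~\ref{cor:safe_regret_safe_benchmark} without a separate proof, treating it as an immediate recombination of Lemma~\ref{le:clairvoyant_controller_optimization}'s noncausal SLS characterization of the benchmark with the safety-constraint dualization and eigenvalue/Schur-complement reduction from the proof of Theorem~\ref{th:safe_regret_optimal_control}, which is exactly what you do. Your observation that $\Delta_{\textit{snc}}\succeq 0$ may fail for the safe benchmark --- and that the SDP reduction survives regardless because the eigenvalue-minimization identity does not require positive semidefiniteness --- is sound and consistent with the paper's own remark after the corollary that no safe clairvoyant policy need be pointwise optimal; the only nit is that $\max_{\|\mathbf{w}\|_2\le 1}\mathbf{w}^\top M\mathbf{w}$ equals $\max\left(0,\lambda_{\operatorname{max}}(M)\right)$ rather than $\lambda_{\operatorname{max}}(M)$ for indefinite $M$, a discrepancy already absorbed by the constraint $\lambda>0$ in \eqref{prob:safe_regret_optimal_sdp}.
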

Note that Assumption~1 guarantees that both optimization problems in \eqref{eq:safe_benchmark_optimal_H2_Hinfty} are feasible, since their solution spaces include that of \eqref{prob:safe_regret_optimal_sdp}.
We recall that in the unconstrained setting both cost formulations in \eqref{eq:safe_benchmark_optimal_H2_Hinfty} were minimized by the same noncausal input sequence  \eqref{eq:clairvoyant_controller_optimization} as per Lemma~\ref{le:clairvoyant_controller_optimization}. Conversely, in the constrained case there may not exist a safe clairvoyant control law that outperforms all other safe policies on every disturbance instance. Hence, the minimizers of the two optimization problems in \eqref{eq:safe_benchmark_optimal_H2_Hinfty} may be different, and their optimality should only be understood in the $\mathcal{H}_2$ or $\mathcal{H}_\infty$ sense. This observation is consistent with the partial-information setup studied in \citep{goel2021regret}.

The results of Theorem~\ref{th:safe_regret_optimal_control} and Corollary~\ref{cor:safe_regret_safe_benchmark} naturally extend to more complex settings. For instance, the proposed optimization standpoint allows one to constrain the system closed-loop responses to lie in any arbitrary set $\mathcal{S}$, provided that it admits a convex representation. In particular, constraints of the form $\{\bm{\Phi}_x, \bm{\Phi}_u\} \in \mathcal{S}$ could model additional performance requirements, structural constraints that arise from the distributed nature of the system under control, or sparsity surrogate requirements.

\section{Numerical Results}
We test the performance of the proposed safe regret-optimal control law $\mathcal{S}\mathcal{R}_\textit{nc}$ against classical constrained $\mathcal{H}_2$ and $\mathcal{H}_\infty$ controllers, which can be computed by solving \eqref{eq:safe_benchmark_optimal_H2_Hinfty} subject to additional sparsity constraints \eqref{eq:sls_causal_sparsities_constraints_optimization}. For our experiments, we consider the evolution of system \eqref{eq:state_dynamics}, starting from the unknown initial condition $x_0 = 0$, with
\begin{equation*}
    A_t = \rho \begin{bmatrix}
        0.7 & 0.2 & 0\\
        0.3 & 0.7 & -0.1\\
        0 & -0.2 & 0.8
    \end{bmatrix}\,, ~
    B_t = \begin{bmatrix}
        1 & 0.2\\
        2 & 0.3\\
        1.5 & 0.5
    \end{bmatrix}\,, ~ \forall t \in \{0 \dots T-1\}\,,
\end{equation*}
where $\rho$ is the spectral radius of the system, and $T = 30$. Letting $\rho = 0.7$, we first compute $\{\bm{\Phi}_x^{\textit{nc}}, \bm{\Phi}_u^{\textit{nc}}\}$ as the solution to \eqref{eq:clairvoyant_controller_optimization}, choosing $\mathcal{Q} = I_{30} \otimes I_3$ and $\mathcal{R} = I_{30} \otimes I_2$ in \eqref{eq:lqr_cost}, where $\otimes$ denotes the Kronecker product. Subsequently, we solve the SDP in \eqref{prob:safe_regret_optimal_sdp} to synthesize a control policy that minimizes dynamic regret while complying, for all possible initial conditions $-1 \leq x_0 \leq 1$ and perturbations $-1 \leq w_t \leq 1$, with safety constraints $-3 \leq x_t \leq 3\,, ~ -2 \leq u_t \leq 2\,, ~ \forall t \in \{0 \dots T-1\}$. We simulate the evolution of the closed-loop system perturbed by 1000 different realizations of a disturbance sequence uniformly distributed between $-1$ and $1$, i.e., $w_t \overset{\text{iid}}{\sim} \mathcal{U}_{[-1,1]^3}$, and we verify that the input and state trajectories are safe in all rounds, as expected. 

Then, for the two cases, $\rho = 0.7$ (open-loop stable system) and $\rho = 1.05$ (open-loop unstable system), we compare the average control cost suffered by these safe control laws when the disturbances are drawn according to a variety of stochastic and deterministic profiles. For the case $\rho = 1.05$, we relax the safety constraints and require that $-10 \leq x_t \leq 10\,, ~ -10 \leq u_t \leq 10\,, ~ \forall t \in \{0 \dots T-1\}$ to achieve feasibility. We collect our results in Table~\ref{table:results}.\footnote{The code that reproduces our numerical examples is available at \href{https://github.com/DecodEPFL/SafeMinRegret}{https://github.com/DecodEPFL/SafeMinRegret}. Please refer to the simulation code for a precise definition of the disturbance profiles that appear in Table~\ref{table:results}.} As expected, when the true perturbation sequence follows a Gaussian distribution, i.e., $w_t \overset{\text{iid}}{\sim} \mathcal{N}(0,I_3)$, $\mathcal{SH}_2$ achieves the best performance, closely followed by our safe regret-optimal control law $\mathcal{SR}_{\textit{nc}}$ when $\rho = 0.7$, as observed in \cite{goel2020regret} for the unconstrained case. Similarly, when the disturbance realizations are chosen adversarially, $\mathcal{SR}_{\textit{nc}}$ nearly tracks the performance of $\mathcal{SH}_\infty$, which attains the lowest control cost. Instead, our $\mathcal{SR}_{\textit{nc}}$ consistently outperforms both $\mathcal{SH}_2$ and $\mathcal{SH}_\infty$ in almost all other scenarios. We conjecture that this improvement is linked with the optimality of the control benchmark in \eqref{eq:dynamic_regret_difference_quadratic_form_delta_definition}, which is tailored to the system at hand. Furthermore, we observe that the performance increase may be very significant. This is made clear, for instance, in our experiments with $\rho = 0.7$, where $\mathcal{SH}_2$ and $\mathcal{SH}_\infty$ always incur a loss that is at least $40\%$ higher.
\begin{table}[ht]
\caption{Average control cost increase relative to the control policy denoted with {\color{darkgreen} \textbf{1}}.}
\label{table:results}
\vspace{3pt}
\parbox{.495\linewidth}{
    \centering
    Open-loop stable system: $\rho = 0.7$
    \rowcolors{1}{}{lightgray}
    \begin{tabular}{c|ccc}
        \hline
        $\mathbf{w}$ & $\mathcal{SH}_2$ & $\mathcal{SH}_\infty$ & $\mathcal{SR}_{\textit{nc}}$\\
        \hline
        $\mathcal{N}(0,1)$ & \color{darkgreen} \textbf{1} & +21.14\% & + 10.89\%\\
        $\mathcal{U}_{[0.5, 1]}$ & +63.42\% & $>$+100\% & \color{darkgreen} \textbf{1}\\
        $\mathcal{U}_{[0, 1]}$ & +40.69\% & $>$+100\% & \color{darkgreen} \textbf{1}\\
        $1$ & +67.74\% & $>$+100\% & \color{darkgreen}\textbf{1}\\
        $\operatorname{sin}$ & +58.12\% & $>$+100\% & \color{darkgreen}\textbf{1}\\
        $\operatorname{sawtooth}$ & +46.27\% & $>$+100\% & \color{darkgreen} \textbf{1}\\
        $\operatorname{step}$ & +66.49\% & $>$+100\% & \color{darkgreen} \textbf{1}\\
        $\operatorname{stairs}$ & +45.27\% & $>$+100\% & \color{darkgreen} \textbf{1}\\
        $\operatorname{worst}$ & +18.45\% & \color{darkgreen} \textbf{1} & +7.74\%\\
        \hline
    \end{tabular}
}
\hfill
\parbox{.495\linewidth}{
    \centering
    Open-loop unstable system: $\rho = 1.05$
    \rowcolors{1}{}{lightgray}
    \begin{tabular}{c|ccc}
        \hline
        $\mathbf{w}$ & $\mathcal{SH}_2$ & $\mathcal{SH}_\infty$ & $\mathcal{SR}_{\textit{nc}}$\\
        \hline
        $\mathcal{N}(0,1)$ & \color{darkgreen} \textbf{1} & $>$+100\% & + 51.99\%\\
        $\mathcal{U}_{[0.5, 1]}$ & +36.60\% & +9.91\% & \color{darkgreen} \textbf{1}\\
        $\mathcal{U}_{[0, 1]}$ & +5.60\% & +20.14\% & \color{darkgreen} \textbf{1}\\
        $1$ & +44.57\% & +7.25\% & \color{darkgreen}\textbf{1}\\
        $\operatorname{sin}$ & +39.10\% & +12.89\% & \color{darkgreen}\textbf{1}\\
        $\operatorname{sawtooth}$ & +26.53\% & +13.52\% &\color{darkgreen} \textbf{1}\\
        $\operatorname{step}$ & +15.66\% & \color{darkgreen} \textbf{1} & \color{darkgreen} \textbf{1} \\
        $\operatorname{stairs}$ & +15.37\% & +0.46\% &\color{darkgreen} \textbf{1}\\
        $\operatorname{worst}$ & $>$+100\% &\color{darkgreen} \textbf{1} & +26.51\%\\
        \hline
    \end{tabular}
}
\end{table}

\section{Conclusion}
We have presented a novel method for convex synthesis of regret-optimal control policies that comply with hard safety requirements. To do so, we have first characterized the clairvoyant policy by extending the SLS framework, and we have then nested the corresponding solution into a constrained regret-optimal SLS program. Numerical results show that control laws that safely minimize regret can adapt to heterogeneous disturbance sequences, effectively interpolating between the performance of, or even prevailing over, constrained $\mathcal{H}_2$ and $\mathcal{H}_\infty$ controllers. Future work encompasses extensions to the infinite-horizon case, as well as distributed control and model-free scenarios.

\acks{Research supported by the Swiss National Science Foundation under the NCCR Automation (grant agreement 51NF40\textunderscore 180545).}

\bibliography{references}

\end{document}